\def\endthebibliography{%
  \def\@noitemerr{\@latex@warning{Empty `thebibliography' environment}}%
  \endlist
}
\newtheorem{thm}{Theorem}
\theoremstyle{definition}
\begin{document}
%
\title{Vulnerability  Assessment of Power Grids Based on Both Topological and Electrical Properties}
%
%
%

\author{Cunlai~Pu~and~Pang~Wu

\thanks{C. Pu and P. Wu are with the School of Computer Science and Engineering, Nanjing University of Science and Technology, Nanjing 210094, China. Email: pucunlai@njust.edu.cn. }
}

\maketitle

\begin{abstract}
In modern power grids, a local failure or attack can trigger  catastrophic cascading failures, which make it challenging to assess the attack vulnerability of power grids.  In this Brief, we define the $K$-link attack problem  and study the attack vulnerability  of power grids under cascading failures. Particularly, we propose a link centrality measure based on both topological and electrical properties of power grids. According to this centrality, we propose a greedy attack algorithm and an optimal attack algorithm.  Simulation results on  standard IEEE bus test data show  that the optimal attack  is better than the  greedy attack   and the traditional PSO-based  attack  in fracturing  power grids. Moreover, the greedy attack  has  smaller computational complexity  than the optimal attack and  the PSO-based  attack with an adequate attack efficiency. Our work helps to understand the vulnerability of power grids and provides some clues for securing power grids.

\end{abstract}

\begin{IEEEkeywords}
Vulnerability assessment, power grid, link centrality, cascading failure, network attacks.
\end{IEEEkeywords}

%
\IEEEpeerreviewmaketitle

\section{Introduction}
%
%
%
%

    Nowadays, power grids in the real world face various kinds of risks such as natural disasters and attacks. Even  worse, a local failure in power grids can  result in large-scale blackouts \cite{schafer2018dynamically,yang2017small}. A recent example is the nationwide recurring electrical blackouts in Venezuela began in March 2019, which was supposed to be caused  by a local vegetation fire and cyberattacks. The catastrophic cascades of failures  pose a great threat to human life and national security.  Thus, it is of great importance to understand and control  the cascading failures of  power grids.

 In the past decade, the complex network theory has been widely applied to the study of  cascading failures \cite{barabasi2016network}. On the modelling side,  power grids can be abstracted as  interdependent networks, and then the percolation theory has been used to explore the dynamics of cascading failures \cite{gao2012networks}. Rich behaviors have been observed when taking the power grid as interdependent network.  For instance, Buldyrev et al. \cite{buldyrev2010catastrophic} found a hybrid phase transition (HPT), where the order parameter has both a jump and a critical scaling.

  On the controlling side, researchers proposed many optimization strategies against  cascading failures in power grids.  Tu et al. \cite{tu2018optimal} used the simulated annealing method to optimize the network topology, and found that it is better to make the network sparsely connected, and place the generators as decentralized hubs. They further investigated the  weak interdependency between networks of cyber-physical systems (CPS) and discussed how the failure propagation probabilities affect the robustness of CPS \cite{tu2019robustness}.
   Chen et al. \cite{chen2017robustness} performed the critical node analysis  to identify the vital nodes  in terms  of network robustness.  They found that assortative coupling of node destructiveness is more robust in densely coupled networks, whereas disassortative coupling of node robustness and node destructiveness is better in sparsely coupled networks.
Zhong et al. \cite{zhong2019restoration} studied the repair process against cascading failures  by considering the optimization  of repair resource, timing and load tolerance, for different coupling strength and network topologies of interdependent networks.
Zhu et al. \cite{zhu2018optimization} established two multiobjective optimization models that consider both the operational cost of
links and the robustness of  networks.
Zhang et al. \cite{zhang2018optimal}  employed the particle swarm optimization (PSO) algorithm to optimize the defense resource allocation to improve the network robustness.

To better understand and control cascading failures, we need to explore the role  of individual nodes and links in power grids. When quantifying the importance of nodes or links, the complex network theory only focuses on  network topology information \cite{barabasi2016network,jiang2016effects,jiang2018cascade}. However,  the electrical features of nodes and links are profound \cite{azzolin2018electrical}. Particularly,  a link of small topological importance might have large current load. The broken of this kind of link  has significant impact on the function of power grids.   It is thus more reasonable  to consider both  topological and electrical features of power grids when characterizing the importance of nodes and links.

In this brief, we study vulnerability assessment of power grids under cascading failures. We define the importance of links based on both topological and electrical features, and remove a few important links as the initial attack that triggers cascading failures.   Our main contributions are as follows:
\begin{itemize}
\item We propose a link centrality measure, which combines  link degree and link current. The weights of the two features are tunable and represented by two variables. This centrality is better than link degree  and link current in quantifying the importance of links in power grids.

\item According to the link centrality, we propose a greedy attack algorithm and an optimal attack algorithm. These attack algorithms are designed to cause large-scale cascading failures, based on which we can assess the vulnerability of power grids by simulation.

\end{itemize}

In the next section, we  present the cascading failure model and related metrics. In section \uppercase\expandafter{\romannumeral3}, we introduce our link centrality measure and the  parameter tuning method.  In section \uppercase\expandafter{\romannumeral4}, we define the link attack problem and provide our  attack algorithms. In section \uppercase\expandafter{\romannumeral5}, we show the simulation results and related analysis.  Section \uppercase\expandafter{\romannumeral6} is our conclusion.

\section{Model and  Measurement}
In a  power grid, power stations and transmission lines can be abstracted as nodes and links, respectively. Then, we obtain the network topology of power grid, denoted by $G= (V, E) $,  where $V$ is the node set with $N$ nodes, and $E$ is the link set consisting of $M$ links.

\subsection{ Current model}
Usually,  there are generator node, consumer node, distribution node, and transformer
node in a power grid. Here, following Ref. \cite{tu2018optimal} and for the purpose of simplification, we only consider two kinds of nodes: generator  node $i$ and consumer node $j$.
 Then,  the Kirchhoff's current law equation for a power grid is written as
\begin{equation}
\label{eqn_example}
Y*{[\cdots\ v_i\ v_j\  \cdots]} = {[\cdots\ v_i\ I_j\  \cdots]},
\end{equation}
in which
 \begin{equation}
 Y=\left[\begin{matrix} \ldots&\ldots&\ldots&\ldots\\\ldots&y_i&0&\ldots\\\ldots&-Y_{ji}&Y_{jj}&\ldots\\ \ldots&\ldots&\ldots&\ldots&   \end{matrix} \right],
 \end{equation}
and $v_i$ and $v_j$ represent the voltages of  generator node and consumer node, respectively.  $I_j$ is the current of consumer node $j$. In matrix  $Y$, $y_i=1$. $Y_{ij}$ is the admittance of link $ (i, j)$, and $Y_{ij} =0$ if nodes $i$ and $j$ are not connected. Also, we have $Y_{jj}\!\!=-\!\!\sum_{i \neq j}Y_{ji}$.
When the admittances of transmission lines, the voltages
of generator nodes and the current consumptions of
consumer nodes are given, the voltage of each node can be
computed by Eq. (1). Then, the current flowing through
link $(i, j)$ can be calculated as $ I_{ij} = (v_i-v_j)*Y_{ij}$.

\subsection{Cascading failure model}
 Assume the load of node $i$ is $L{(i)} = u_i*I_{oi}$, where $I_{oi}$ is the total current flowing out of  node $i$, and the load of link $(i, j)$ is the current flowing through it, i.e., $I_{ij}$. The maximum load  a node can bear is set to be $1+\alpha$ times of its original load, and the maximum load of link $(i, j)$ is assumed to be $1+\beta$ times of its original current,  where parameters $\alpha$ and $\beta$ are the safety margins of nodes and links, respectively.  Note that the original state of nodes or links corresponds to the case when the power grid operates normally, that is there is no attacks or failures. In the cascading failure model \cite{buldyrev2010catastrophic}, there is usually an initial attack, e.g.  randomly removing a  node or link. This initial event will cause the load change of the other nodes and links, especially for those close to the area of initial attack. When the load of a node or link exceeds its maximum allowed  value, it will break, which further causes the load change of the other nodes and links.
The detailed steps in the cascading failure simulation process are as follows \cite{tu2018optimal}:


\romannumeral1) Calculating the initial load and maximum load of each component (node and link) in the power  grid.

\romannumeral2) Randomly removing a component.

\romannumeral3) The grid topology changes due to the  removal, recalculating the load of each component. A component is set to be broken if its load exceeds its maximum.

\romannumeral4) After removing the failed components, the network  splits into several small subgraphs. If a subgraph does not contain a  generator node, all nodes in this subgraph are set to be invalid nodes.

\romannumeral5)  Repeating steps \romannumeral3) -\romannumeral4)~  until the load of all  remaining components is no greater than the maximum.

\subsection{ Vulnerability  measurement}
In power systems, the outage scale is usually measured by the number of failed nodes. Following Ref. \cite{tu2018optimal},   the damage that is caused by component set $i$  is quantified as
\vspace{-4px}
\begin{equation}
\label{eqn_example}
\Phi(i) = \frac{N_{unserved}(i)}{N},
\end{equation}
where $N_{unserved}(i)$ is the number of total failed nodes after the cascading failure caused by the initial removal of component set $i$.  Note that the failed nodes contain  the overloaded ones and  those in the subnetwork  of no  generator nodes. Obviously, the larger the damage, the more critical the component set is to the power grid.

\section{Link centrality measure and its solution }
Nodes and links are the main components in power grids. Node centrality has been widely discussed in the literature. Here we study the link centrality which is relatively  less discussed, but critical to the vulnerability assessment of power grids. Previously, many topology based  link centralities were developed \cite{barabasi2016network}.
However,  the joint effect of topological and electrical properties of links  on the vulnerability of power girds is still not well understood.

\subsection{Our link centrality measure}
We quantify the centrality of links based on both topological and electrical features. Specifically, we consider the link degree and link current. The degree of a link is the number of links (except itself) incident to its two end nodes. The current of a link is the rate of flow of electric charge pasting it. Note that during the cascading failure, the degree and current of a link might change due to the broken of overloaded components. Since we focus on the initial attacks,  we use the original state of power grid to quantify the centrality of links. Then, the centrality of link $(i, j)$ is defined as
\begin{equation}
\Theta_{ij}=h_1D_{ij}+h_2I_{ij},
\end{equation}
where $D_{ij}$ and $I_{ij}$ are the initial degree and current of link $(i, j)$, respectively. For  link current, we ignore  its direction and use its absolute value.  $h_1$ and $h_2$,  ranging in $(-\infty,  \infty)$,  are the weights of link degree and link current. In real applications, we usually need to find the optimal values of $h_1$ and $h_2$, which is a NP-hard problem.

\subsection{ Parameter tuning  based on PSO }
We use  the particle swarm optimization algorithm (PSO) \cite{kennedy2010particle}, to search the optimal parameters of our link centrality measure.  Compared with other heuristic algorithms, PSO has powerful global search ability and is easy to implement.
In this algorithm, there are $m$$(>0)$ particles moving on the network. Particle $i$ is a candidate solution of parameters, $X_i=[h_1, h_2]$, and  corresponds to a local optimal value $p^i_{best}$. For all the particles,  there is a global optimal value $g_{best}=max\{p^i_{best}\}$.
 The particle position is continuously updated according to the following equations:
\vspace{-4px}
\begin{equation}
\label{eqn_example}
{v_i}^{k+1}=w{v_i}^k+c_1r_1(p^i_{best}-{x_i}^k)+c_2r_2(g_{best}-{x_i}^k)
\end{equation}
\vspace{-4px}
\begin{equation}
\label{eqn_example}
{x_i}^{k+1}={x_i}^k+{v_i}^k
\end{equation}
\vspace{-4px}
\begin{equation}
\label{eqn_example}
w =w_0-iter/iter_{max}
\end{equation}
In Eq. (5), $r_1$ and $r_2$  are random numbers in [0,1]. $c_1$ is the cognitive coefficient and $c_2$ is the social learning coefficient. In Eq. (6),  ${v_i}^k$ and ${x_i}^k$ are the velocity and position of the $i$th particle in the $k$th iteration. In Eq. (7), $w$ is the inertia coefficient. Large inertia coefficient helps  particles  jump out of the local optimum, while  small inertia coefficient is beneficial to the local accurate search and  convergence of the algorithm. $w_0$ is the initial  inertia coefficient.  $iter$  and $ iter_{max}$ respectively represent  the current number of iterations and the maximum allowed number of iterations.


\section{Application of link centrality in vulnerability assessment }
 We apply our link centrality to the vulnerability assessment of power grids. In the assessment, we usually simulate  network attacks and the consequent cascading failures, and then calculate the  damage caused by the attacks.  Different kinds of attacks result in different damage. Here, we consider the link attacks. The simplest link attack strategy is random attack, in which we randomly remove a certain fraction of links. More efficient link attack strategies are desired in the vulnerability assessment.

\subsection{ Problem Definition}
Given an integer $K$,  the problem is to find a set of $K$ links ($1 \leq K \leq M$), the removal of  which will cause the maximum damage to the power grid (abbreviated as $KLS$ problem). Note that the damage is measured as the percentage of failed nodes after the cascading failure (see Eq. (3)), which is triggered by the initial removal of the $K$-link set.

\begin{thm}
The $KLS$ problem is $NP$-$Complete$.
\end{thm}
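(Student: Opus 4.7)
The proof naturally splits into two parts: membership in NP and NP-hardness. For the decision version (``is there a $K$-link set whose removal causes damage at least $D$?''), membership in NP is straightforward. Given a candidate $K$-link set $S$, a verifier runs the cascading failure simulation of Section II-B. Each round either kills at least one component or terminates, so the process halts after at most $N+M$ rounds; each round costs only a solution of the linear Kirchhoff system (Eq.~(1)), so the entire verification is polynomial in the grid size. The verifier then compares $\Phi(S)$ against $D$.

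For NP-hardness my plan is to reduce from \textsc{Maximum Coverage} (equivalently the decision version of \textsc{Set Cover}), which is a classical NP-hard problem. Given a universe $U=\{u_1,\dots,u_n\}$, subsets $S_1,\dots,S_m\subseteq U$, and integers $K,t$, I would construct a power-grid instance $G=(V,E)$ with a single generator $g$, an ``element node'' $e_i$ for each $u_i$, and one ``attack gadget'' per set $S_j$. Each gadget consists of a designated target link $\ell_j$ incident to $g$ followed by a tree of auxiliary links that fan out to every $e_i$ with $u_i\in S_j$. Admittances, generator voltage, and consumer currents are tuned so that, in the nominal configuration, all target links carry identical current, and cutting any single $\ell_j$ detaches precisely the subtree containing $\{e_i:u_i\in S_j\}$ from $g$; these nodes are then flagged as failed under step (\romannumeral4) of the simulation. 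Setting $K$ identical in both instances and $D=t/|V|$ then yields: a \textsc{Maximum Coverage} solution of value $\geq t$ exists iff some $K$-link attack achieves damage $\geq D$.

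The main obstacle will be ensuring that the continuous Kirchhoff dynamics of Eq.~(1) faithfully realize the discrete combinatorial behaviour needed by the reduction. Two failure modes must be ruled out: (a) an adversary might cheat by removing a non-target link inside a gadget and inflicting comparable damage; (b) current redistribution after cutting $\ell_j$ might overload links in \emph{other} gadgets, inflating the damage count beyond what the combinatorial instance predicts. I would address (a) by choosing the gadget topology so that every non-target link lies on a redundant parallel path with ample link safety margin $\beta$, making its removal electrically harmless and leaving the target links as the unique profitable attacks. I would address (b) by connecting each gadget to $g$ through a high-admittance ``trunk'' so that the gadgets are electrically nearly decoupled, and by taking the node safety margin $\alpha$ large enough to absorb the small residual perturbations.

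If that electrical bookkeeping proves delicate, I would fall back to a simpler extremal argument: take $\alpha,\beta\to\infty$ so that no component ever overloads. In this regime KLS collapses to the purely topological problem of removing $K$ edges to maximize the number of vertices disconnected from the generator set, which is already known to be NP-hard as an edge-interdiction/vertex-reachability problem. Since this restricted regime is a special case of KLS, NP-hardness follows immediately, and combined with the NP membership argument above this completes the proof.
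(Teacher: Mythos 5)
Your proposal is correct in outline but takes a genuinely different route from the paper, and the difference matters. The paper argues NP membership the same way you do (the cascade simulation terminates in polynomially many rounds, each solvable via Eq.~(1)), but for hardness it reduces the $KLS$ problem \emph{to} the 0/1 knapsack problem and then invokes the hardness of knapsack. That reduction runs in the wrong direction: showing $KLS \le_p$ knapsack only bounds $KLS$ from above and establishes nothing about its hardness (and the analogy is loose anyway, since the damage of a $K$-link set is not an additive function of per-link damages, so it is not even a faithful knapsack instance). You instead reduce a known NP-hard problem \emph{to} $KLS$ --- Maximum Coverage via gadgets in the main route, or the $K$-edge interdiction/reachability problem in the fallback --- which is the direction actually required, and you correctly pass to the decision version (``damage at least $D$'') before speaking of NP-completeness, a point the paper glosses over by stating the optimization problem itself as NP-complete.

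What your sketch still owes before it is a complete proof: the gadget construction needs the electrical bookkeeping you yourself flag (ruling out profitable non-target removals and cross-gadget overloads), so as written it is a plan rather than a proof; the cleaner path is your fallback, which is essentially complete once you supply a citation or short proof that removing $K$ edges to maximize the number of vertices disconnected from a source set is NP-hard (it is, via the network inhibition/interdiction literature). Note also that the fallback is legitimate only because $\alpha$ and $\beta$ are inputs of the cascading-failure model, so instances with very large safety margins are valid $KLS$ instances in which the only failed nodes are those in generator-free components --- exactly the interdiction problem. With that reference in place, your fallback alone yields a sound NP-completeness proof, which is more than can be said for the paper's own argument.
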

\begin{proof}
 Given a set of $K$ links,  we can calculate the percentage of failed nodes after the cascading failure triggered by the removal of the set of links in polynomial time, which means that the $KLS$ problem is  $NP$. Moreover, the $KLS$ problem can be reduced to the 0/1 knapsack problem \cite{freville2004multidimensional}. In this problem, given a set of items, each with a weight and a value, we determine the number of each item, 0 or 1,  to include in a collection so that the total weight is less than or equal to a given limit and the total value is as large as possible.  In the $KLS$ problem, each link can cause some damage, and can be selected or not. The number of selected links is fixed to be $K$. The task is to determine the $K$-link set that achieves the largest damage. Since  0/1 knapsack problem is $NP$-$Complete$, so is the $KLS$ problem.
\end{proof}

\subsection{Optimal attack based on PSO}
Since the ${KLS}$ problem can be reduced to the 0/1 knapsack problem and is $NP$-$Complete$, we employ the PSO to search the optimal $K$-link set. In the  $m$ particles, particle $i$ is set to be $X_i=[x_1, x_2, \ldots, x_M]$, where $x_j$ corresponds to the $j$th link of the power grid. If this link is selected to be removal, $x_j=1$; otherwise, $x_j=0$. The constraint is $\sum_{j=1}^{M} x_j=K$. The particles update their velocity and position iteratively based on Eqs. (5)-(7) until finding the approximately optimal solution of the   ${KLS}$ problem.
 Since the duration of single cascade failure is unable to estimate, we only consider the number of cascading failures  when discussing the time complexity.  The   PSO based optimal attack (PSO-OA) has the time complexity of $O(m*iter_{max})$.  The pseudocode of this algorithm is in Algorithm 1.
\begin{algorithm}[htb]
  \caption{PSO based optimal attack (PSO-OA)}
  \label{alg:Framwork}
  \begin{algorithmic}
    \Require
     Adjacency matrix $G$,  power generator node set $Q$, and  integer $K$
    \Ensure
     The total number of selected links is fixed to be $K$.
 \\ \textit{Initialisation}: Randomly generate $m$ particles; each particle contains $M$ elements; randomly set $K$ elements to be 1 and the rest to 0.
     \For{$i=1, $$iter_{max}$}
     \For{$j=1, m$}
      \State Calculate ${\Phi(X_j)}$ for each particle
      \EndFor
       \State Update the optimal  solution for each particle $p^i_{best}$ and the global optimal  solution for the particle swarms  $g_{best}$
   \For{$s=1, m$}
  \If { $\Phi(X_s)$ $>$  $\Phi(p^s_{best})$}
     \State $\Phi(p^s_{best})$, $p^s_{best}$  $\leftarrow$ $\Phi(X_s)$, $X_s$
     \EndIf
       \If { $\Phi(X_s)$ $>$  $\Phi(g_{best})$}
     \State  $\Phi(g_{best})$, $g_{best}$  $\leftarrow$ $\Phi(X_s)$, $X_s$
         \EndIf
         \If { $\Phi(X_s)$ $=$  $g_{best}$}
         \State  Re-initialize the particle
            \EndIf
  \EndFor

      \State Update speed  $v$ and position $x$
      \State  Select the top $K$ positions of each particle based on the ranking of their current value and reset them to 1; the other positions are reset to 0.
      \EndFor

    \Return  the $K$ links of the largest  damage
  \end{algorithmic}
\end{algorithm}

\subsection{Greedy attack based on link centrality}
In a simple greedy attack, we can calculate the damage of each link based on Eq. (3) and then select the $K$ links of the largest damage. However, it is time consuming to calculate the damage of all links. Therefore, we use our link centrality measurement to filter the links so that the relatively important links are left for consideration. Specifically, we rank  all of the links based on their link centrality values. Then, we calculate the damage of the  top $L$ ($ML\%\geq K$) percent of links in the ranking,  and select  the $K$ links of the largest damage.

Note that the parameters of our link centrality measure affect the ranking of links and therefore the selection of the $K$-link set. The better parameters correspond to a better ranking, and then  $L$ can be  much smaller.
The  time complexity of the link centrality based greedy attack (LC-GA) is $O(ML\%)$ without considering the duration of single cascading failure.
Note that  we  can also use  PSO to search the optimal parameters in this algorithm, but the computational cost is very large.
The pseudocode of LC-GA is given in Algorithm 2.
  \begin{algorithm}[htb]
    \caption{Link centrality based greedy attack (LC-GA)}
  \label{alg:Framwork}
  \begin{algorithmic}
    \Require
     Adjacency matrix $G$,  power generator node set $Q$, and  integer $K$
    \Ensure
   The total number of selected links is fixed to be $K$.
    \\ \textit{Initialisation}: Give a parameter set
    \State  Select the top $L$\% links based on the ranking of link centrality values
    \For{$j=1, ML\%$}
    \State Calculate the damage  ${\Phi}$ for each selected link
      \EndFor

\Return  the $K$ links of the largest damage
  \end{algorithmic}
\end{algorithm}

\subsection{Link centrality based optimal attack }
The PSO-OA algorithm does not use the topological and electrical properties of power grids. Thus, it is supposed to be not efficient. The LC-GA algorithm does use the topological and electrical properties, but it needs to calculate the link centrality and the damage of many  links, which has large computational cost.  Here, we employ  PSO to search the optimal parameters so that removing the top $K$ links in the ranking of link centrality leads to the maximum damage. The set of the top $K$ links  is thus an approximate  solution  of the ${KLS}$ problem.
 The time complexity of this link centrality based optimal attack (LC-OA) is $O(m*iter_{max})$ without considering the duration of single cascading failure. The  pseudocode of this algorithm is provided  in Algorithm 3.
 \begin{algorithm}[htb]
  \caption{Link centrality based optimal attack (LC-OA)}
  \label{alg:Framwork}
  \begin{algorithmic}
    \Require
     Adjacency matrix $G$, power generator node set $Q$, and  integer $K$
    \Ensure
    The total number of selected links is fixed to be $K$.
         \\ \textit{Initialisation}: Randomly generate $m$ particles; each particle   contains 2 elements which are  randomly set to values in the range  [-1,1].
          \For{i=1, {$iter_{max}$}}
          \State Calculate the centrality of each link based on Eq. (4)
          \State Calculate the damage of removing the $K$ links of the largest centrality
          \State Update $p_{best}$ and $g_{best}$  based on the  PSO algorithm
         \EndFor

\Return the $K$ links of the largest centrality
  \end{algorithmic}
\end{algorithm}

\section{Simulation results}
Based on MATLAB, we do simulation experiments to validate our link centrality measure and compare the performance of  proposed attack algorithms. We use the standard  IEEE bus test data \cite{bworld} including IEEE 118 bus, 145 bus, and 162 bus. We randomly set 10 percent of nodes to be generator nodes.
The related parameters of the experiments are  given in Table 1.


\begin{table}[H]
\caption{}\label{tab1}
\setlength{\tabcolsep}{7mm}{
\begin{tabular} {cccc}
\hline
\multicolumn{4}{c}{Parameter settings in the experiments}\\
\hline
$m$&10&$c_2$&0.7\\
$iter_{max}$&30&$\alpha$&0.2\\
$w_0$&0.96&$\beta$&0.2\\
$c_1$&0.7&$L$&50\%\\
\hline
\end{tabular}}
\end{table}

\subsection{Single link attack}
\begin{figure}[H]
\centering
\includegraphics[width=2.5in]{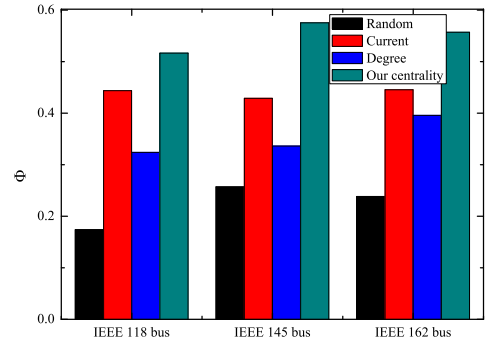}
\caption{
The results of damage $\Phi$ caused by the removal of the most critical link associated  with different centrality measures on three IEEE bus test data.
}
\label{fig1}
\end{figure}
First, we study the single link attack,  $K=1$, which is removing the most critical link in terms of damage. The link degree, link currency, and our link centrality are used respectively to determine the critical link.  For our link centrality measure, we use  PSO to find the optimal solution (see LC-OA). The results are shown in Fig. 1, which are the average of 100 independent runs.   We can see that for all the three IEEE bus test data, the random removal has the smallest damage, and the damage of our link centrality is larger than the link degree and link current. Moreover,  link current is more efficient than  link degree, which indicates that we should focus  on electrical features more than  topological features in the vulnerability assessment of power grids.


\subsection{Multiple link attack }

\begin{figure}[htb]
\centering
\includegraphics[width=3.5in]{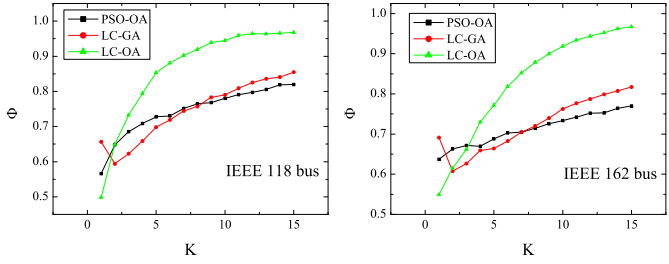}
\caption{
The results of damage $\Phi$ vs. number of removed links $K$ for the three attack algorithms on IEEE 118 bus and 162 bus data.
}
\label{fig2}
\end{figure}
\begin{figure}[htb]
\centering
\includegraphics[width=3.5in]{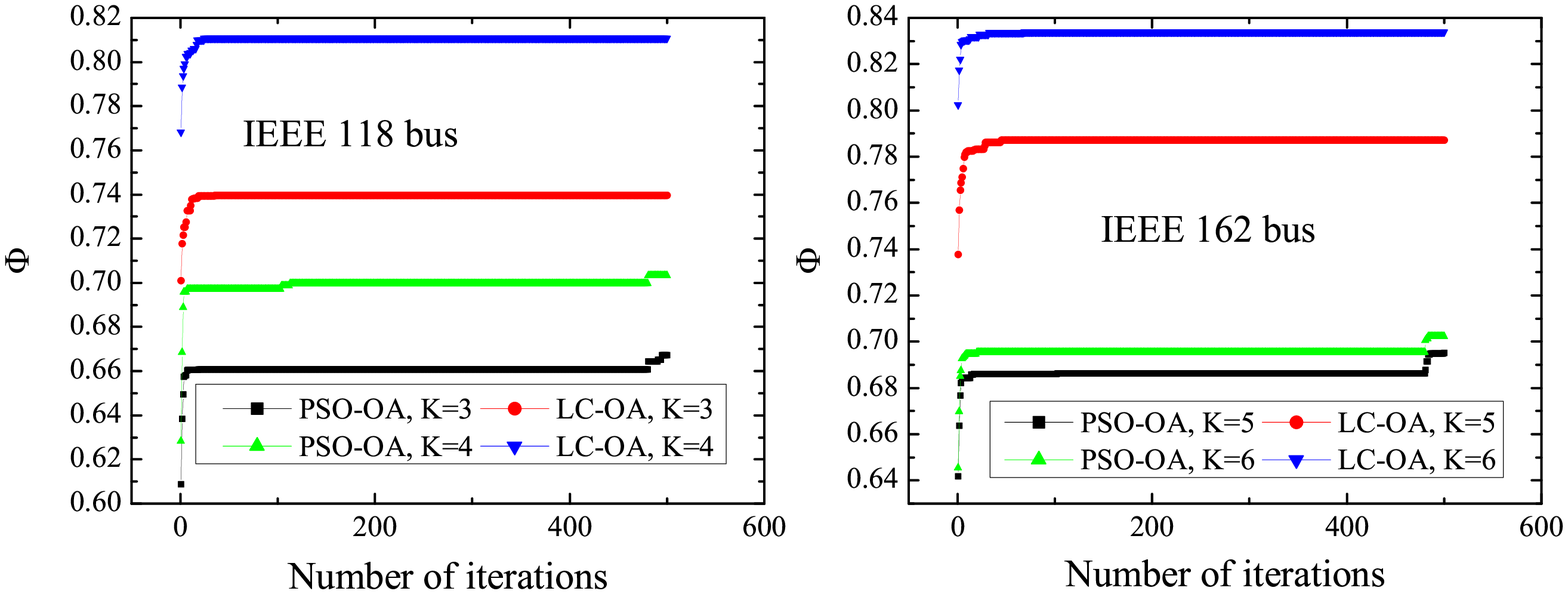}
\caption{The results of damage $\Phi$ vs. number of iterations for LC-OA and PSO-OA on IEEE 118 bus and 162 bus data.  }
\label{fig3}
\end{figure}
Furthermore, we study the multiple link attack,  $k>1$,  to compare the performance of PSO-OA, LC-OA, and LC-GA. For LC-GA,  parameters $h_1$ and $h_2$ are both set to  1.   The results of  damage are provided in Figs. 2 and  3.     In Fig. 2, for all the IEEE bus  data, the damage generally increases with the number of removed links for all the algorithms. This is  because the more links removed at the beginning result in  a wider range of cascading failure. An exception is that for LC-GA, the damage decreases when the number of removed links increases from 1  to 2.  The reason is that LC-GA is essentially  a greedy algorithm so that its solution of $K=2$ is not the optimal one, while for  $K=1$, the solution could be the optimal one.    Moreover, we see that the damage of LC-OA is much larger than  PSO-OA and LC-GA.
In real situations, the numbers of particles and iterations  are limited, as the parameter settings in our experiments. In this case,  LC-OA can find a better solution than PSO-OA for the multiple link attack problem. The damage of  LC-GA is smaller than LC-OA, since the former is a greedy algorithm, while the latter is based on global optimization essentially. Note that LC-GA is sometimes better than PSO-OA as shown in the results of $K>6$ on IEEE 162 bus data. This further validates our link centrality measure.
In Fig. 3, we see that for a given $K$, the damage of LC-OA increases and converges faster than PSO-OA with the growth of number of iterations. This further demonstrates  that LC-OA is more efficient than PSO-GA. Moreover,  we see that  PSO-OA is prone to fall into the local optimum and needs many iterations to jump out of it.

\section{Conclusion}
In summary, we study the vulnerability assessment of power grids under cascading failures. We define the $K$-link attack problem and prove that  it is NP-complete. Particularly, we propose a link centrality measure by combining the link degree with link current. With this centrality, we develop two attack algorithms, which are the link centrality based greedy attack (LC-GA) and the link centrality based optimal attack (LC-OA). We evaluate our link centrality measure and its related attack algorithms on standard IEEE bus test data. Simulation results show that our link centrality measure performs better than the link degree and link current in  identifying the optimal link in the single link attack scenario. Furthermore, in the multiple link attack problem,  LC-OA is much more efficient than  LC-GA and the traditional PSO based optimal attack (PSO-OA) algorithm. LC-GA has  lower computational complexity than LC-OA and PSO-OA with a decent attack efficiency.


%





\ifCLASSOPTIONcaptionsoff
  \newpage
\fi

\end{document}